\DeclareMathOperator*{\argmax}{argmax}
\newtheorem{theorem}{Theorem}
\DeclareMathOperator\erf{erf}
\acrodef{BS}{base station}
\acrodef{RA}{receive antenna}
\acrodef{PA}{predictor antenna}
\acrodef{IID}{independent and identically distributed}
\acrodef{CDF}{cumulative distribution function}
\acrodef{PDF}{probability density function}
\acrodef{cu}{channel use}
\acrodef{ACK}{acknowledgement}
\acrodef{NACK}{negative acknowledgement}
\acrodef{SNR}{signal-to-noise ratio}
\acrodef{HARQ}{hybrid automatic repeat request}
\acrodef{CSIT}{channel state information at the transmitter side}
\acrodef{INR}{incremental redundancy}
\acrodef{npcu}{nats per channel use}
\acrodef{bpcu}{bits per channel use}
\acrodef{MIMO}{multiple input multiple output }
\acrodef{MRC}{maximum ratio combing}
\begin{document}
\captionsetup{belowskip=0pt,aboveskip=0pt}

\title{On Delay-limited Average Rate of HARQ-based Predictor Antenna Systems}
\author{Hao~Guo,~\IEEEmembership{Student~Member,~IEEE},
        Behrooz~Makki,~\IEEEmembership{Senior~Member,~IEEE},
        Mohamed-Slim Alouini,~\IEEEmembership{Fellow,~IEEE},
        and Tommy~Svensson,~\IEEEmembership{Senior~Member,~IEEE}
\thanks{H. Guo and T. Svensson are with the Department of Electrical Engineering, Chalmers University of Technology, 41296 Gothenburg, Sweden (email: hao.guo@chalmers.se; tommy.svensson@chalmers.se).}
\thanks{B. Makki is with Ericsson Research, 41756 Gothenburg, Sweden (email: behrooz.makki@ericsson.com).}
\thanks{M.-S. Alouini is with the King Abdullah University of Science and Technology,
Thuwal 23955-6900, Saudi Arabia (e-mail: slim.alouini@kaust.edu.sa).}}

\maketitle

\begin{abstract}
Predictor antenna (PA) system is referred to as a system with two sets of antennas on the roof of a vehicle, where the PAs positioned in the front of the vehicle are used to predict the channel state observed by the receive antennas (RAs) that are aligned behind the PAs.   In this work, we study the effect of spatial mismatch on the accuracy of channel state information estimation, and analyze the delay-constrained average rate of hybrid automatic repeat request (HARQ)-based PA systems. We consider variable-length incremental redundancy (INR) HARQ protocol, and derive a closed-form expression for the maximum average rate  subject to a maximum delay constraint. Moreover, we study the effect of different parameters such as the vehicle speed, the antenna distance and the rate allocation on the system performance. The results indicate that, compared to the open-loop case, the delay-limited average rate is considerably improved with our proposed PA-HARQ scheme. 
\end{abstract}


\section{Introduction}
One of the main use cases of future wireless networks is vehicle communications. In such setups, \ac{CSIT} plays a key role in the system design and optimization schemes. However, due to the mobility of the users, the typical channel estimation methods may be inaccurate as the position of the transmit/receive antennas may change quickly. For this reason, the  \ac{PA} concept has been proposed and evaluated  in, e.g.,  \cite{Sternad2012WCNCWusing,DT2015ITSMmaking,Guo2019WCLrate,guo2020semilinear,guo2020power}. Here, the \ac{PA} setup refers to as a system where two sets of antennas are deployed on the roof of a vehicle, and the \ac{PA}(s) positioned at the front of the vehicle are used to predict the channel observed by the \acp{RA} that are positioned behind the \ac{PA}(s). 

Typical PA systems have two problems, namely, spatial mismatch and spectral efficiency, compared to  a static \ac{MIMO} system with as many antenna elements. Specifically, due to, e.g., speed change or feedback/processing delay, the \ac{RA}s may not reach the same  point as the \ac{PA}  when they receive data, leading to imperfect \ac{CSIT}. This spatial mismatch problem has been studied in \cite{DT2015ITSMmaking,Guo2019WCLrate,guo2020semilinear} where different techniques  are applied to compensate for this effect. Moreover, while the \ac{PA} is mainly used to improve the CSIT, using one of the antennas only for \ac{CSIT} estimation may not be efficient, in terms of spectral efficiency. This becomes important especially when we remember that, with a \ac{PA} setup on top of buses/trains,  the \ac{PA} should serve a large number of users inside the vehicle, e.g., watching videos. In such cases, we have a  delay-limited system in which a certain number of information bits should be transferred within a limited period of time.

From another perspective, \ac{HARQ} is a well-known technique to improve the data transmission reliability and efficiency. The main idea of \ac{HARQ} is to retransmit the message that experienced poor channel conditions, reducing the outage probability \cite{Caire2001throughput,Sesia2004incremental}. As the \ac{PA} system includes the feedback link, i.e., from the \ac{PA} to the \ac{BS},  \ac{HARQ} can be supported by the \ac{PA} structure in high mobility scenarios. That is, the \ac{BS} could potentially adjust the transmit rate/power to the \ac{RA} based on the feedback from the \ac{PA}. In this way, it is expected that such a joint implementation of the PA and the HARQ scheme can improve the system efficiency and reliability.

In this paper, we develop an \ac{HARQ}-based scheme for delay-limited \ac{PA} systems where we need to transfer a certain number of bits within a limited period of time. The goal is to maximize the delay-constrained average rate in the presence of imperfect \ac{CSIT}. We model the spatial mismatch of the PA system as an equivalent non-central Chi-squared channel with known part from the \ac{PA}-\ac{BS} channel and uncertainties from spatial mismatch, and perform rate adaptation analysis. Particularly, using the \ac{INR} protocol and variable-length coding, we derive closed-form expressions for the maximum delay-limited average rate as well as for the optimal rate allocation. Finally, we evaluate the effect of different parameters such as transmission delay and the vehicle speed on the average rate.  

As opposed to \cite{Sternad2012WCNCWusing,DT2015ITSMmaking}, where numerical and test-bed based analysis are presented, in this work we model the \ac{PA} system analytically, and design an \ac{HARQ}-based \ac{PA} scheme. Moreover, our problem setup  is different from the ones in \cite{Guo2019WCLrate,guo2020semilinear,guo2020power} because of the \ac{HARQ} setup and variable-length coding. Finally, our discussions on the average rate of the \ac{HARQ}-assisted \ac{PA} systems with imperfect \ac{CSIT}  have not been presented before. 

The analytical and simulation results show that, while an HARQ-assisted PA system leads to considerable performance improvement, compared to the open-loop case, the average rate is remarkably affected by the spatial mismatch and delay constraint. Moreover, while our proposed scheme provides the same \ac{CSIT} accuracy as typical \ac{PA} setups, it gives the chance to better utilize the \ac{PA} setup and improve the average rate. Finally, for a broad range of vehicle speeds/delay constraints, the optimal average rate scales with \ac{SNR} almost logarithmically.

\section{System Model}
\begin{figure}
\centering
  \includegraphics[width=1.0\columnwidth]{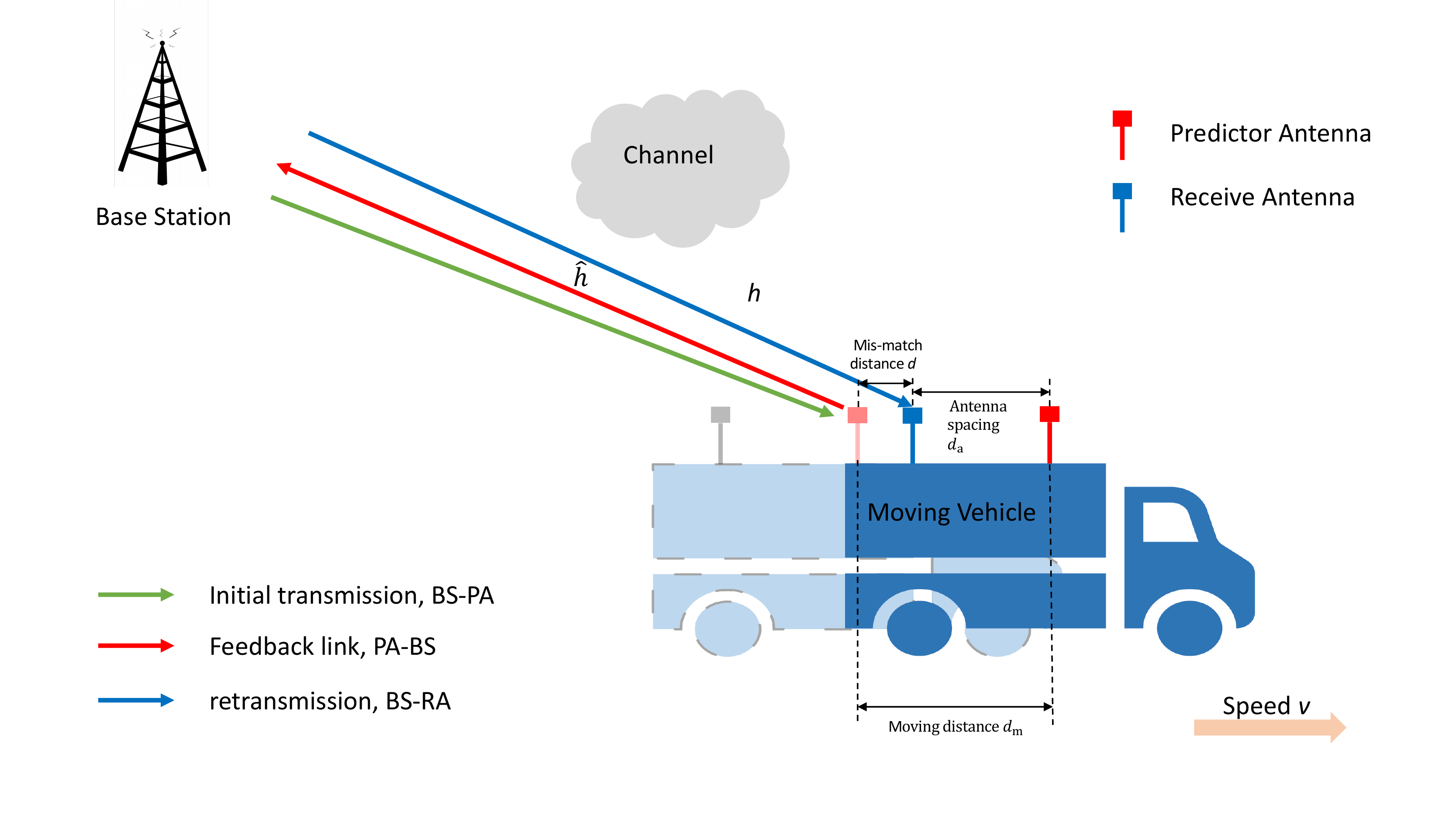}\\
\caption{HARQ-based PA system with spatial mismatch.}
\label{fig_system}
\end{figure}
In the standard PA setup, proposed in \cite{Sternad2012WCNCWusing,DT2015ITSMmaking,Guo2019WCLrate,guo2020semilinear}, the PA at the front of the vehicle\footnote{In the analysis, we consider a forward direction of the vehicle such that the front and the backward antennas work as the PA and the RA, respectively. With a reverse direction, the role of the antennas will be switched.} is used only for channel estimation. As opposed, here, we develop a setup where the PA is used for both channel estimation and initial data transmission under frequency-division duplex (FDD). Our proposed setup and its timing can be seen in Figs. \ref{fig_system} and \ref{fig_timeslot}, respectively. Here,  at $t_1$ the BS sends pilots as well as the encoded data with certain initial rate $R$ and power $p$ to the PA. Then, at $t_2$, the PA estimates the BS-PA channel $\hat{h}$ from the received pilots. At the same time, the PA tries decoding the data and generates an \ac{ACK} if the decoding of the data is successful. Otherwise, a \ac{NACK} will be fed back. \ac{ACK} or \ac{NACK} is sent to the \ac{BS} at $t_3$. In case of an NACK, the \ac{PA} also sends \ac{CSIT} feedback to the BS so that it gets an estimation of $\hat{h}$. Define $t_3-t_1 = \delta$. Based on the received feedback, the \ac{BS}  decides whether or not to retransmit the message with an adapted rate. Finally, at $t_4$ and with \ac{NACK}, the BS sends data to the \ac{RA} with an adapted rate. For simplicity of presentations, we assume the feedback link to be perfect. This is motivated by the fact that, compared to the direct BS-PA link, considerably lower rate is required in the feedback link. Finally, one can follow the same procedure as in \cite{guo2020semilinear} to model the quantization feedback error as an extra additive Gaussian noise.

\begin{figure}
\centering
  \includegraphics[width=1.0\columnwidth]{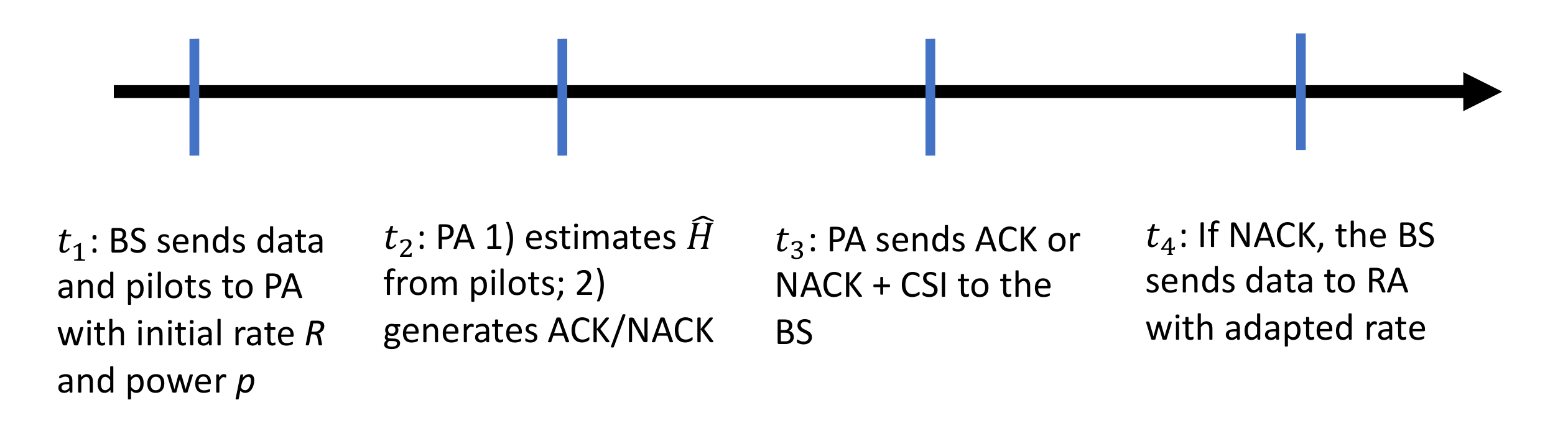}\\
\caption{Timing of  the proposed PA-HARQ scheme.}
\label{fig_timeslot}
\end{figure}

In this way, the signals received by the \ac{PA} and the \ac{RA} in the first and second rounds are
\begin{align}\label{eq_yhat}
\hat{y} = \sqrt{p}\hat{h}\hat{x} + z,
\end{align}
and 
\begin{align}\label{eq_y}
y = \sqrt{p}hx + z,
\end{align}
respectively, where $p$ is the transmit power while $\hat{x}$ and $x$ are the transmitted message with unit variance, and $z \sim \mathcal{CN}(0,1)$ represents  the \ac{IID} complex Gaussian noise added at the receiver side. Also, $h$ represents the BS-RA channel coefficient. Note that, as explained in the following, the signals $\hat{x}$ and $x$ may be of different lengths, as  functions of the channel quality and maximum delay constraint.

We assume that the vehicle moves through a stationary electromagnetic standing wave pattern, and the system is moving in a straight line. Because of the short control-loop delay, the moving distance is in the order of centimeter so the moving direction is almost linear. This assumption has been experimentally verified in, e.g., \cite{Jamaly2014EuCAPanalysis}. Thus, if the \ac{RA} reaches exactly the same position as the position of the \ac{PA} when receiving the pilots, it will experience the same channel, i.e., $h = \hat{h}$, and the \ac{CSIT} will be perfect. However, if the \ac{RA} does not reach the same  point as the \ac{PA}, due to, e.g., the \ac{BS} processing delay is not equal to the time that we need until the \ac{RA} reaches the same  point as the \ac{PA}, the \ac{RA} may receive the data in a place different from the one in which the \ac{PA} was receiving the pilots. Such spatial mismatch may lead to \ac{CSIT} inaccuracy, which  affects the system performance considerably.

Let us define $d$ as the effective distance between  the place where the PA estimates the channel at time $t$, and the place where the RA reaches at time $t+\delta$. Then, we have
\begin{align}\label{eq_d}
    d = |d_\text{a} - d_\text{m} | = |d_\text{a} - v\delta|,
\end{align}
where $d_\text{m}$ is the displacement of the vehicle during time interval $\delta$, and $v$ is the velocity of the vehicle. Then, $d_\text{a}$ is the antenna separation between the PA and the RA. We assume $d$ to be known by the BS. 

Using the classical Jake's correlation model and  assuming a semi-static propagation environment, i.e., assuming that the coherence time of the propagation environment is larger than $\delta$,  the channel coefficient of the BS-RA downlink can be modeled as \cite[eq. (5)]{Guo2019WCLrate}
\begin{align}\label{eq_h}
    h = \sqrt{1-\sigma^2} \hat{h} + \sigma q.
\end{align}
Here, $q \sim \mathcal{CN}(0,1)$ which is independent of the known channel value $\hat{h}\sim \mathcal{CN}(0,1)$, and spatial correlation factor $\sigma$ is a function of the effective distance $d$ as 
\begin{align}\label{eq_sigma}
    \sigma = \frac{\frac{\phi_2^2-\phi_1^2}{\phi_1}}{\sqrt{ \left(\frac{\phi_2}{\phi_1}\right)^2 + \left(\frac{\phi_2^2-\phi_1^2}{\phi_1}\right)^2 }} = \frac{\phi_2^2-\phi_1^2}{\sqrt{ \left(\phi_2\right)^2 + \left(\phi_2^2-\phi_1^2\right)^2 }} .
\end{align}
Furthermore, $\phi_1 = \bm{\Phi}_{1,1}^{1/2}$ and $\phi_2 = \bm{\Phi}_{1,2}^{1/2}$, where $\bm{\Phi}$ is  from Jake's model
\begin{align}\label{eq_tildeH}
     \bigl[ \begin{smallmatrix}
  \hat{h}\\h
\end{smallmatrix} \bigr]= \bm{\Phi}^{1/2} \bm{H}_{\varepsilon},
\end{align}
where $\bm{H}_{\varepsilon}$ has independent circularly-symmetric zero-mean complex Gaussian entries with unit variance, and $\bm{\Phi}$ is the channel correlation matrix with the $(i,j)$-th entry
\begin{align}\label{eq_phi}
    \Phi_{i,j} = J_0\left((i-j)\cdot2\pi d/ \lambda\right) \forall i,j.
\end{align}
Here, $J_n(x) = (\frac{x}{2})^n \sum_{i=0}^{\infty}\frac{(\frac{x}{2})^{2i}(-1)^{i} }{i!\Gamma(n+i+1)}$ represents the $n$-th order Bessel function of the first kind. Moreover, $\lambda$ denotes the carrier wavelength, i.e., $\lambda = c/f_\text{c}$ where $c$ is the speed of light and $f_\text{c}$ is the carrier frequency. Note that if, instead of uniform Jake’s model, we consider different scattering models \cite[Table 6.2]{goldsmith2005wireless}  to study the effect of spatial correlation, (\ref{eq_phi}) is changed to 
\begin{align}\label{eq_phi_gaussina}
    \Phi_{i,j} = e^{-\left(\pi d/\lambda\right)^2} \forall i,j,
\end{align}
for Gaussian scattering, and
\begin{align}\label{eq_phi_rec}
    \Phi_{i,j} = \text{sinc}\left(\frac{2d}{\lambda}\right) \forall i,j,
\end{align}
for rectangular scattering (see Section \ref{Sec:IV} for discussions).

From (\ref{eq_h}), for a given $\hat{h}$ and $\sigma \neq 0$, $|h|$ follows a Rician distribution.  Let us define the channel gains between \ac{BS}-\ac{PA}   and \ac{BS}-\ac{RA}  as $\hat{g} = |\hat{h}|^2$ and $ g = |{h}|^2$, respectively.  Then, the \ac{PDF} of $g|\hat{g}$ is given by
\begin{align}\label{eq_pdf}
    f_{g|\hat{g}}(x) = \frac{1}{\sigma^2}e^{-\frac{x+(1-\sigma^2)\hat{g}}{\sigma^2}}I_0\left(\frac{2\sqrt{x(1-\sigma^2)\hat{g}}}{\sigma^2}\right),
\end{align}
which is non-central Chi-squared distributed with the \ac{CDF}
\begin{align}\label{eq_cdf}
    F_{g|\hat{g}}(x) = 1 - Q_1\left( \sqrt{\frac{2(1-\sigma^2)\hat{g}}{\sigma^2}}, \sqrt{\frac{2x}{\sigma^2}}  \right),
\end{align}
with $Q_1(\cdot,\cdot)$ being the Marcum-Q-function. As shown in Fig. \ref{fig_block}, let us denote the maximum acceptable delay constraint by $L_{\text{max}}$ channel use.  That is, $L_\text{max}$ gives the maximum acceptable delay, or the maximum possible coding length, for data transmission. We present the block length allocated to the first and the second transmission signals by $L$ and $L(\hat g)$, respectively, with $L+L(\hat g)\le L_{\text{max}}$. Here, $L$ is optimized based on average performance, and $L(\hat g)$ is based on the instantaneous channel quality $\hat{g}$. This is because we do not have instantaneous \ac{CSIT} in Round 1, and, therefore, we should send the data with a fixed rate which is determined based on all possible values of channel realization weighted by the channel distribution. However, in the second round, the sub-codeword length is adapted based on the instantaneous channel quality $\hat{g}$ available at the \ac{BS}. In the following, we optimize the average rate in the presence of variable-length coding and a maximum delay constraint.

\begin{figure}
\centering
  \includegraphics[width=1\columnwidth]{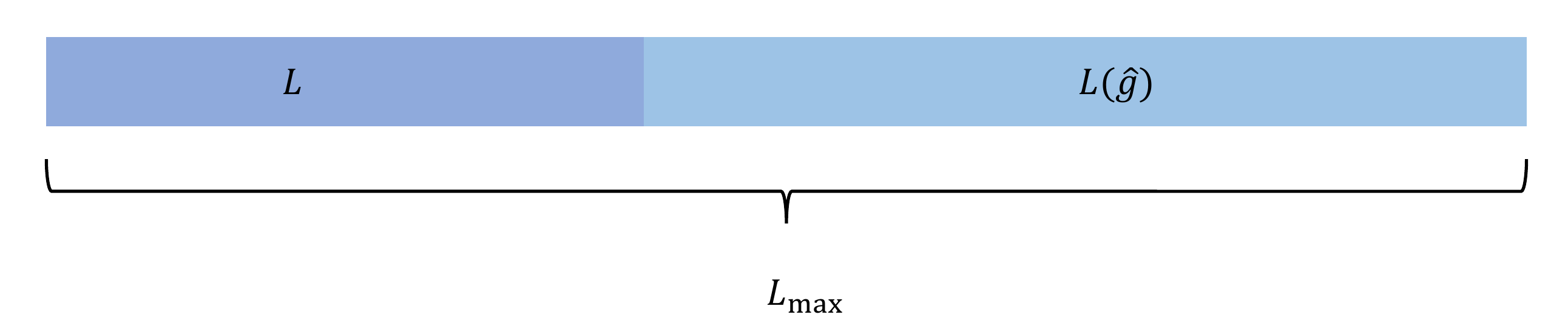}\\
\caption{Illustration of block length for two transmission rounds. }
\label{fig_block}
\end{figure}

\section{Analytical Results}\label{Sec. III}
Let us denote the initial rate by $R = \frac{K}{L}$ \ac{npcu}, where $K$ is the number of nats\footnote{Here, the results are presented in terms of nats, i.e., symbols. It is straightforward to represent the results in terms of bits by scaling the rate terms by $\log_2e$.} per codeword. Here, the results are presented for the cases with moderate/long codewords, which is of interest in moving relays \cite{DT2015ITSMmaking}. In such cases, the error probability converges to the outage probability and we can follow the same information theoretical approach as in \cite{Caire2001throughput,Sesia2004incremental} to analyze the system performance. Then, in the first transmission, if the instantaneous channel gain supports the rate, i.e., $R < \log(1+\hat{g}p)$ or equivalently, $\hat{g} > \frac{\theta}{p}$ where $\theta = e^{R}-1$, an \ac{ACK} is sent back to the \ac{BS} and the transmission stops. On the other hand, if $R > \log(1+\hat{g}p)$, an \ac{NACK} as well as the instantaneous channel gain $\hat{g}$ are sent back to the \ac{BS}.  Then, in Round 2, we determine the required rate as  
\begin{align}\label{eq_rghat}
    R(\hat{g}) = \log(1+\hat{g}p) = \frac{K}{L+L(\hat{g})},
\end{align}
which is equivalent to
\begin{align}\label{eq_lg}
    L(\hat{g}) = \frac{K}{\log(1+\hat{g}p)} - L,
\end{align}
with $L(\hat g)$ being the length of the redundancy bits sent in the second round. Let us denote  $R_\text{min} = \frac{K}{L_\text{max}}$ which is limited by the delay constraint $L_\text{max}$. If $L(\hat{g}) + L > L_{\text{max}}$, with $ \theta_{\text{min}}= e^{R_{\text{min}}}-1$, leading to $\hat{g}<\frac{\theta_{\text{min}}}{p}$, no signal is sent in the second round  because the maximum delay constraint can not be satisfied. Otherwise, if $\hat{g}>\frac{\theta_{\text{min}}}{p}$, we send a sub-codeword with length $L(\hat{g})$. Note that, following the standard INR protocol \cite{Caire2001throughput,Sesia2004incremental}, in the retransmissions only redundancy bits are sent and one can follow the state-of-the-art variable-length coding schemes developed for INR, e.g., \cite{Sesia2004incremental,ma2017incremental}, to generate them.

In this way, the average rate, for given $\hat{g}$, is obtained as
\begin{align}\label{eq_etapre}
    \eta|\hat{g} = R\mathcal{I}\left\{\hat{g}>\frac{\theta}{p}\right\} +\log(1+\hat{g}p)\mathcal{I}\left\{\frac{\theta_{\text{min}}}{p}\leq\hat{g}\leq\frac{\theta}{p}, \mathcal{A}\right\}.
\end{align}
Here, $\mathcal{I}(x)$ is  the indicator  function and $\mathcal{A}$ is the event that the message is decoded in the second round.  Note that, with the rate allocation scheme of (\ref{eq_rghat}), it is straightforward to show that $\mathcal{A}$ occurs if and only if $g \geq \hat{g}$. In this way, (\ref{eq_etapre}) can be rewritten as
\begin{align}\label{eq_eta}
       \eta|\hat{g} = R\mathcal{I}\left\{\hat{g}>\frac{\theta}{p}\right\} + \log(1+\hat{g}p)\mathcal{I}\left\{\frac{\theta_{\text{min}}}{p}\leq\hat{g}\leq\frac{\theta}{p}, g\geq\hat{g}\right\}. 
\end{align}


Finally, averaging  the system performance over all possible $\hat{g}$, we can obtain the average rate as follows.
\begin{theorem}\label{theorem1}
The average rate of the proposed PA-HARQ scheme is approximately given by (\ref{eq_etabar}).
\end{theorem}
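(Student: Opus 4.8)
The plan is to obtain $\bar\eta$ by averaging the conditional rate (\ref{eq_eta}) over $\hat g$. Since $\hat h\sim\mathcal{CN}(0,1)$, the known gain $\hat g=|\hat h|^2$ is exponentially distributed with density $f_{\hat g}(x)=e^{-x}$, $x\ge 0$. The first indicator term of (\ref{eq_eta}) then contributes $R\,\Pr\{\hat g>\theta/p\}=R\,e^{-\theta/p}$ at once. For the second term I would invoke the remark after (\ref{eq_eta}) that the Round-2 success event is exactly $\mathcal A=\{g\ge\hat g\}$, so that conditioned on $\hat g=x$ the decoding probability is $\Pr\{g\ge x\mid\hat g=x\}$, which is $1-F_{g|\hat g}$ from (\ref{eq_cdf}) evaluated at value $x$ and conditioning $\hat g=x$, i.e. $Q_1\!\left(\sqrt{2(1-\sigma^2)x/\sigma^2},\,\sqrt{2x/\sigma^2}\right)$. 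Hence the second term of $\bar\eta$ is the finite-range integral $\int_{\theta_{\min}/p}^{\theta/p}\log(1+xp)\,Q_1\!\left(\sqrt{2(1-\sigma^2)x/\sigma^2},\,\sqrt{2x/\sigma^2}\right)e^{-x}\,dx$.

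This integral is exactly where the word ``approximately'' in the statement comes from: the product of a Marcum $Q$-function with $\log(1+xp)$ and $e^{-x}$ has no exact closed form. My approach would be to replace the Marcum $Q$-function by a tractable surrogate — the standard Gaussian-$Q$/error-function approximation of $Q_1(\alpha,\beta)$ that is accurate for moderate-to-large arguments (this is presumably why the $\erf$ operator is introduced in the preamble), using the fact that the two arguments here differ in a controlled way governed by $\sigma$. After this substitution the integrand is $\log(1+xp)\,e^{-x}$ times $\erf$ of an affine function of $\sqrt x$, and the integral separates into standard pieces: the $\log(1+xp)\,e^{-x}$ factor yields exponential integrals $E_1(\cdot)$ (equivalently $\mathrm{Ei}(\cdot)$) at the limits $\theta_{\min}/p$ and $\theta/p$ — using $\int\log(1+xp)e^{-x}\,dx = e^{1/p}\bigl[-\log(1+xp)e^{-x} - E_1((1+xp)/p)\bigr]$ after the shift $u=1+xp$ — while the error-function contribution produces complementary-error-function terms; adding these to $R\,e^{-\theta/p}$ and simplifying gives the claimed form (\ref{eq_etabar}). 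An alternative route, keeping $Q_1$ exact, is to linearize $\log(1+xp)$ over the typically short interval $[\theta_{\min}/p,\theta/p]$ and invoke the known closed forms for $\int e^{-x}Q_1(a\sqrt x,b\sqrt x)\,dx$; either way one lands on the same kind of expression.

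The main obstacle is precisely this Marcum-$Q$-times-logarithm integral: there is no exact evaluation, so the crux is choosing an approximation of $Q_1$ that is simultaneously accurate in the operating regime of interest (realistic spatial correlation $\sigma$ and moderate \ac{SNR} $p$) and leaves behind integrals expressible through exponential integrals and error functions. The remaining work — tracking the data-dependent limits $\theta_{\min}/p$ and $\theta/p$, and handling the degenerate cases $\sigma\to 0$ (perfect \ac{CSIT}, where $Q_1$ collapses) and the delay bound $R_{\min}=K/L_{\max}$ becoming active — is routine bookkeeping. I would finish by checking (\ref{eq_etabar}) against a Monte-Carlo evaluation of the exact integral to confirm that the approximation is tight over the parameter ranges used in the paper.
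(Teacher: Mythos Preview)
Your setup is exactly the paper's: average (\ref{eq_eta}) over $\hat g\sim\mathrm{Exp}(1)$, get $Re^{-\theta/p}$ from the first indicator, and reduce the second term to $\int_{\theta_{\min}/p}^{\theta/p}\log(1+px)\,e^{-x}\,Q_1\!\bigl(\sqrt{2(1-\sigma^2)x/\sigma^2},\sqrt{2x/\sigma^2}\bigr)\,dx$. You also correctly identify this integral as the only real obstruction.

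Where you diverge is in how the Marcum factor is tamed. The paper does \emph{not} use a Gaussian-$Q$/erf surrogate for $Q_1(\alpha,\beta)$. Instead it exploits that for small/moderate $\sigma$ the two arguments are nearly equal, replaces $\sqrt{2(1-\sigma^2)x/\sigma^2}\simeq\sqrt{2x/\sigma^2}$, and then invokes the exact equal-argument identity $Q_1(u,u)=\tfrac12\bigl(1+e^{-u^2}I_0(u^2)\bigr)$ together with the Bessel asymptotic $I_0(t)\simeq e^{t}/\sqrt{2\pi t}$. This collapses $Q_1$ to $\tfrac12+\tfrac{\sigma}{4\sqrt{\pi x}}$, so the integral splits into $\tfrac12\int\log(1+px)e^{-x}\,dx$ (handled exactly via $E_1$, your $F_1$) plus $\tfrac{\sigma}{4\sqrt{\pi}}\int x^{-1/2}\log(1+px)e^{-x}\,dx$. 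Only this second piece still resists closed form, and \emph{that} is where the paper applies the log linearization $\log(1+px)\simeq kx+b$ at the interval midpoint, producing the $\erf$ terms and $F_2$.

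So the paper actually combines \emph{both} of the moves you list as alternatives: an approximation of $Q_1$ (but via the equal-argument identity, not a Gaussian-$Q$ bound) \emph{and} a log linearization, applied in sequence. Your primary route---keep the log exact and replace $Q_1$ by an erf of an affine function of $\sqrt{x}$---does not separate as cleanly as you suggest: $\int\log(1+px)\,e^{-x}\,\erf(a\sqrt{x}+b)\,dx$ is not a standard form, and you would still need a further approximation (most naturally the same log linearization) to close it. Your alternative route is closer in spirit, but note that the paper does not rely on a closed form for $\int e^{-x}Q_1(a\sqrt{x},b\sqrt{x})\,dx$ either; the equal-argument collapse is the key device you are missing.
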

\begin{proof}
Using (\ref{eq_eta}) and taking average over $\hat{g}$, whose  \ac{PDF} is given by $f_{\hat{g}}(x) =  e^{- x}$, we have
\begin{align}\label{eq_etabar}
    \eta(R) &= R\Pr\left(\hat{g}\geq\frac{\theta}{p}\right) + \int_{\theta_{\text{min}}/p}^{\theta/p} \log(1+px)f_{\hat{g}}(x)\Pr(g\geq x) \text{d}x\nonumber\\
    &= Re^{-\frac{\theta}{p}}\times\nonumber\\ &~~\int_{\theta_{\text{min}}/p}^{\theta/p}   e^{- x}\log(1+px)\mathcal{Q}\left(\sqrt{\frac{2\left(1-\sigma^2\right)x}{\sigma^2}},\sqrt{\frac{2x}{\sigma^2}}\right) \text{d}x\nonumber\\
    &\overset{(b)}{\simeq} Re^{-\frac{\theta}{p}} + 
    \int_{\theta_{\text{min}}/p}^{\theta/p} e^{-x}\log(1+px)\frac{1}{2} \text{d}x +\nonumber\\ &~~\frac{1}{2}\int_{\theta_{\text{min}}/p}^{\theta/p} e^{-x}\log(1+px)\frac{\sigma}{2\sqrt{\pi x}} \text{d}x \nonumber\\
    &\overset{(c)}{=}Re^{-\frac{\theta}{p}} + \left(F_1\left(\theta/p\right) - F_1\left(\theta_{\text{min}}/p\right)\right) +\nonumber\\
    &\frac{\sigma}{4\sqrt{\pi}}\Bigg(  k\left(F_2(\theta/p) - F_2(\theta_{\text{min}}/p)\right)+\nonumber\\ &~~b\sqrt{\pi}\left(\erf\left(\sqrt{\frac{\theta}{p}}\right)-  \erf\left(\sqrt{\frac{\theta_{\text{min}}}{p}}\right) \right) \Bigg).
\end{align}
Here, $(b)$ is obtained by $\sqrt{\frac{2\left(1-\sigma^2\right)x}{\sigma^2}} \simeq \sqrt{\frac{2x}{\sigma^2}}$ for small/moderate values of $\sigma$, using \cite[Eq.  (A-3-2)]{schwartz1995communication}
\begin{align}
    \mathcal{Q}(x,x) = \frac{1}{2}\left(1+e^{-x^2}I_0(x^2)\right), 
\end{align}
and  $I_0(x) \simeq \frac{e^x}{\sqrt{2\pi x}}$ \cite[Eq. (9.7.1)]{abramowitz1999ia}. Then, in $(c)$ we use linear approximation of $\log(1+px) \simeq kx+b$ at $x = \frac{\theta+\theta_{\text{min}}}{2p}$, with $k = \frac{2p}{\theta+\theta_{\text{min}} +2}$ and $b = -\frac{\theta+\theta_{\text{min}}}{2p}k + \log\left(1+\frac{\theta+\theta_{\text{min}}}{2}\right)$. Moreover, $(c)$ is obtained by  defining $ F_1(x)$  and $ F_2(x)$ as 
\begin{align}
    F_1(x) = -\frac{1}{2}e^{-x}\left(\log(px+1) + E_1\left(\frac{px+1}{p}\right)e^{x+\frac{1}{p}}\right),
\end{align}
and
\begin{align}
    F_2(x) = \frac{1}{2}\sqrt{\pi}\erf\left(\sqrt{x}\right) -\sqrt{x}e^{-x},
\end{align}
with some manipulations. Here,  $E_1(z) = \int_z^\infty \frac{e^{-t}}{t} \text{d}t$ represents the exponential integral, and $\erf(x) = \frac{2}{\sqrt{\pi}}\int_0^{x} e^{-t^2}\text{d}t$ is the error function.
\end{proof}
Using (\ref{eq_etabar}), the optimal average rate can be found as $\eta_{\text{opt}} = \argmax_R~ \eta(R)$.  Setting the derivative of (\ref{eq_etabar}) with respect to $R$ equal to zero, the optimal transmit rate $R$ is found by numerical solution of
\begin{align}
    R_\text{opt} &=\underset{R}{\arg} \Bigg\{ -\frac{1}{p}\left(Re^{R}-p\right)e^{-\frac{\theta}{p}}+\nonumber\\
    &~~~\frac{\sigma}{4\sqrt{\pi}}\Bigg(ke^{R-\frac{\theta}{p}} - k\frac{\left(2e^{R}-p-2\right)e^{R-\frac{\theta}{p}}}{2p^2\frac{\theta}{p}} +2be^{R-\frac{\theta}{p}}\Bigg)+\nonumber\\
    &~~~ \frac{1}{4p}\left(Re^{R}-p\right)e^{-\frac{\theta}{p}}e^{R+\frac{1}{p}}\frac{e^{-\frac{pR+1}{p}}}{\frac{pR+1}{p}} =0\Bigg\},
\end{align}
which is a one-dimensional equation and can be solved effectively by, e.g., bisection method. 

Finally, to evaluate the effectiveness of the proposed setup, we consider following  benchmarks: 
\begin{itemize}
    \item \textit{Open-loop transmission:} The average rate of the open-loop setup, i.e., with no retransmissions, is given by
\begin{align}\label{eq_bench}
    \eta^{\text{Open-loop}} = \int_{\frac{\theta}{p}}^{\infty} Re^{-x} \text{d}x = Re^{-\frac{\theta}{p}},
\end{align}
and the optimal rate allocation is found by setting the derivative of (\ref{eq_bench}) with respect to $R$ equal to zero leading to $R = \mathcal{W}(p) $, where $\mathcal{W}(\cdot)$ represents the Lambert W-function $xe^x = y \Leftrightarrow x = \mathcal{W}(y)$ \cite{corless1996lambertw}.
\item \textit{Basic ARQ:}  Here, we follow the same method as in the proposed scheme, except that in the second round exactly the same signal is retransmitted and the receiver does not combine the two copies of the signal. Then, (\ref{eq_eta}) is rephrased as
\begin{align}\label{eq_eta_basicARQ}
       \eta|\hat{g} = R\mathcal{I}\left\{\hat{g}>\frac{\theta}{p}\right\} + 0.5R\mathcal{I}\left\{\frac{\theta_{\text{min}}}{p}\leq\hat{g}\leq\frac{\theta}{p}, g\geq\hat{g}\right\},
\end{align}
and one can follow the same method as in Theorem \ref{theorem1} to derive the average system performance.
\item \textit{Diversity-based transmission:} Here, we send the same signal in two spectrum resources simultaneously, and do \ac{MRC} at the receiver. Then, the  throughput is given by
\begin{align}\label{eq_eta_diversity}
       \eta|\hat{g},g = 0.5R\mathcal{I}\left\{\log\left(1+\left(g +\hat{g}\right)P\right)>R\right\}. 
\end{align}
\end{itemize}

\section{{Simulation Results}}\label{Sec:IV}

\begin{figure}
\centering
  \includegraphics[width=0.88\columnwidth]{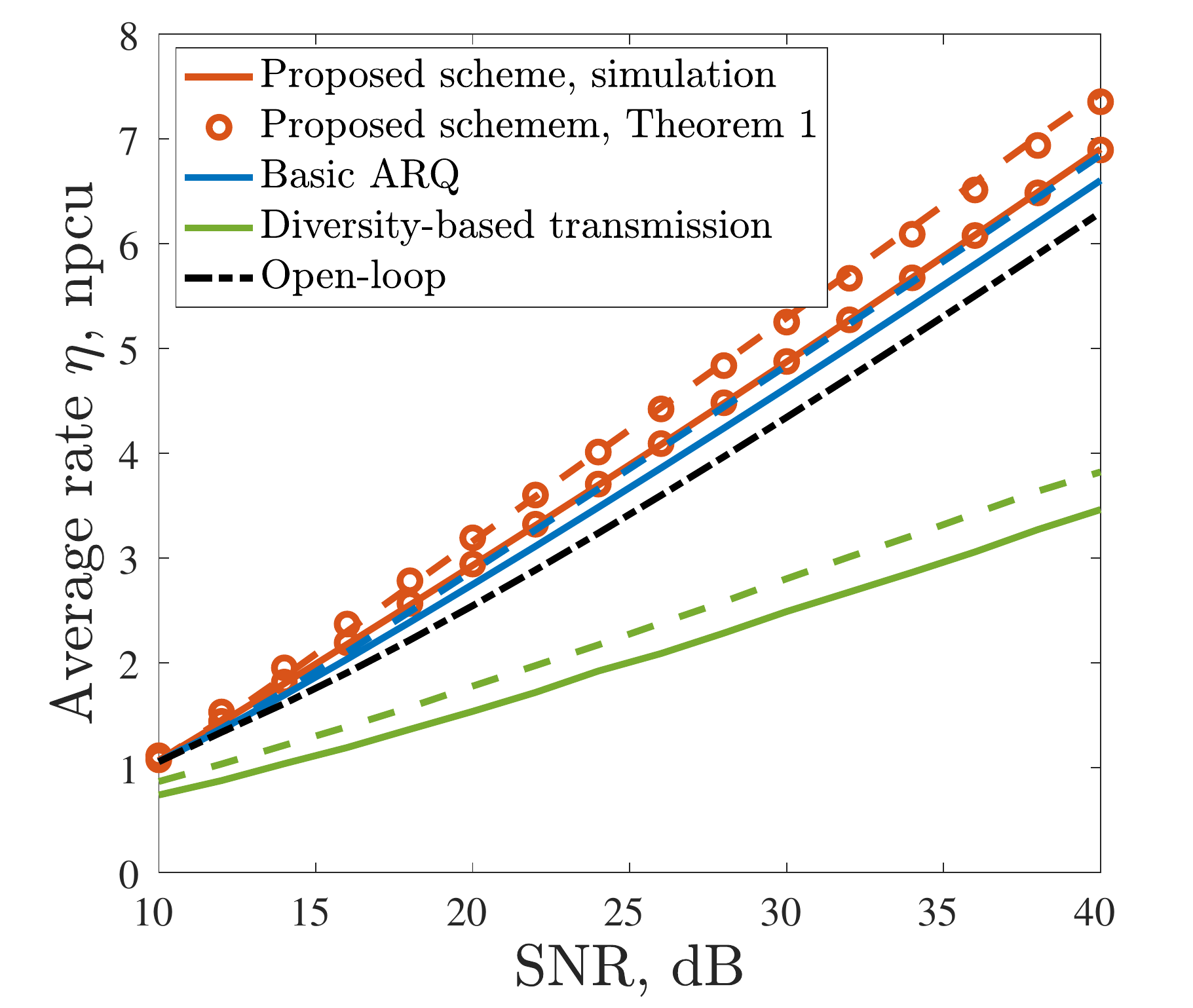}\\
\caption{ $\eta_{\text{opt}}$ vs. SNR. $R_\text{min}$ is set to 2 \ac{npcu}, and $\sigma$ in (\ref{eq_sigma}) is set to 0.1 for dash line and 0.9 for solid line, respectively. }
\label{fig_r1}
\vspace{-5mm}
\end{figure}

In the simulations, we set $ \delta$ = 5 ms, carrier frequency $f_\text{c} = \frac{c}{\lambda}$ = 2.68 GHz with $\lambda$ being the carrier wavelength, and $d_\text{a} = 1.5\lambda$, which affect the  spatial correlation factor $\sigma$ as given in (\ref{eq_sigma}). In Fig. \ref{fig_r1}, we demonstrate the optimal average rate in \ac{npcu} with different cases for a broad range of \ac{SNR}s which, because the noise has unit variance, we define as $10\log_{10}P$. Here, we set $R_\text{min}=2$ \ac{npcu}  which corresponds to a delay constraint $L_\text{max}$ (cf. Sec. III), and $\sigma = 0.1, 0.9$. The approximation values are obtained by Theorem \ref{theorem1}, and the results are compared with the benchmark schemes given in (\ref{eq_bench})-(\ref{eq_eta_diversity}). Figure \ref{fig_r2} presents  optimal values of initial rate allocation $R$ for different values of $R_\text{min}$ as a function of the \ac{SNR}. Here,  we set $R_\text{min}=2,3$ \ac{npcu}, and $\sigma = 0.1$. Then, the optimal average rate as a function of the vehicle velocities is illustrated in Fig. \ref{fig_r3}, with $R_\text{min} = 2$ \ac{npcu}, and \ac{SNR} = 20, 24, and 28 dB. Moreover, we present the approximation values from Theorem \ref{theorem1}.  Here, as explained in \cite[Section II]{Guo2019WCLrate}, manipulating $v$ is equivalent to changing the level of spatial correlation $\sigma$ for given values of $ \delta$, $f_\text{c}$ and $d_\text{a}$ in (\ref{eq_h}). Finally, we study the effect of different scattering models on the system performance, i.e., uniform scattering (\ref{eq_phi}), Gaussian scattering (\ref{eq_phi_gaussina}), and rectangular scattering (\ref{eq_phi_rec}).

\begin{figure}
\centering
  \includegraphics[width=0.88\columnwidth]{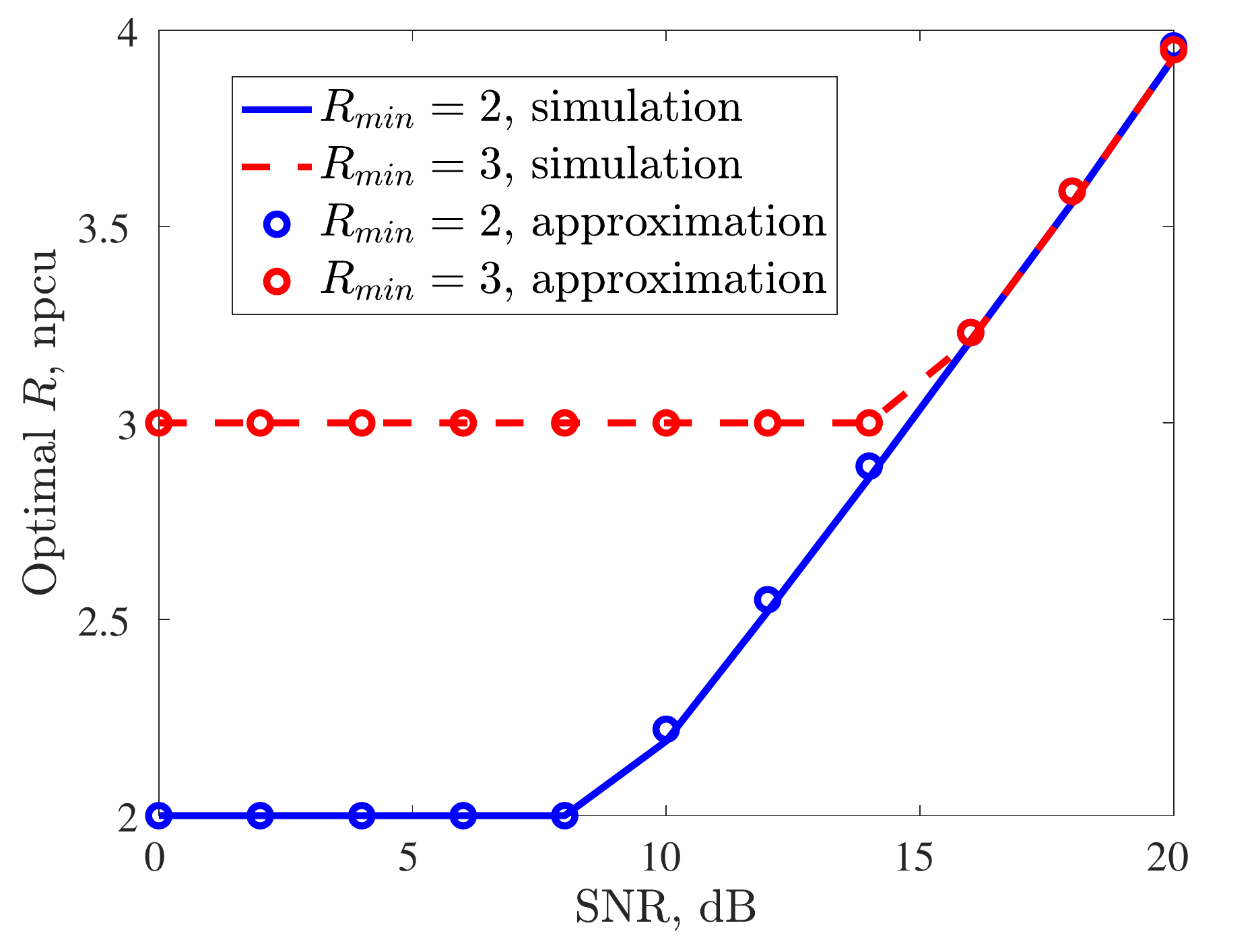}\\
\caption{Optimal values of $R$ vs. SNR, $R_\text{min}=2, 3$ \ac{npcu}, and $\sigma = 0.1$.}
\label{fig_r2}
\vspace{-5mm}
\end{figure}



According to the figures, we can conclude the following:
\begin{itemize}
     \item The approximation scheme of Theorem \ref{theorem1} is  tight for a broad range of values of SNR (Figs. \ref{fig_r1}-\ref{fig_r3}), $R_\text{min}$ (Fig.  \ref{fig_r2}) as well as $\sigma$, or equivalently, $v$ (Figs. \ref{fig_r1}, \ref{fig_r3}). Thus, for different parameter settings, the average rate of the \ac{PA}-\ac{HARQ} system can be well optimized by Theorem \ref{theorem1}.
    \item With deployment of the PA-HARQ setup, remarkable average rate gain  is achieved, compared to benchmark schemes, especially in moderate/high SNRs (Fig. \ref{fig_r1}). Moreover, the average rate increases when the spatial correlation between the \ac{PA} and the \ac{RA} is low, i.e., when $\sigma$ increases (Figs. \ref{fig_r1}, \ref{fig_r3}). 
    \item  For a broad range of parameter settings and SNRs, the optimal rate $R$ increases with \ac{SNR},  in the log domain, almost linearly (Fig. \ref{fig_r2}).  As $R_\text{min}$ increases, i.e., with more stringent delay-constraint, higher  initial transmission rate $R$ is required for the optimal average rate, as can be seen in the low-\ac{SNR} range  of the figure.
    \item  In Fig. \ref{fig_r3}, there is a minimum value for the average rate. This is intuitively because at low and high speeds, i.e., $v<110$ km/h and $v>130$ km/h, we have larger $\sigma$, which although it reduces the CSIT accuracy, provides higher average rate since it gives better spatial diversity. The lowest value of $\eta$ is observed at around 120 km/h, which matches our straightforward calculation from $d = |d_\text{m}-d_\text{a}| $. Moreover, the effect of different scattering models on the system performance is relatively negligible (Fig. \ref{fig_r3}). Thus, the analytical and the simulation-based conclusions hold for different scattering models.
\end{itemize}
\begin{figure}
\centering
  \includegraphics[width=0.88\columnwidth]{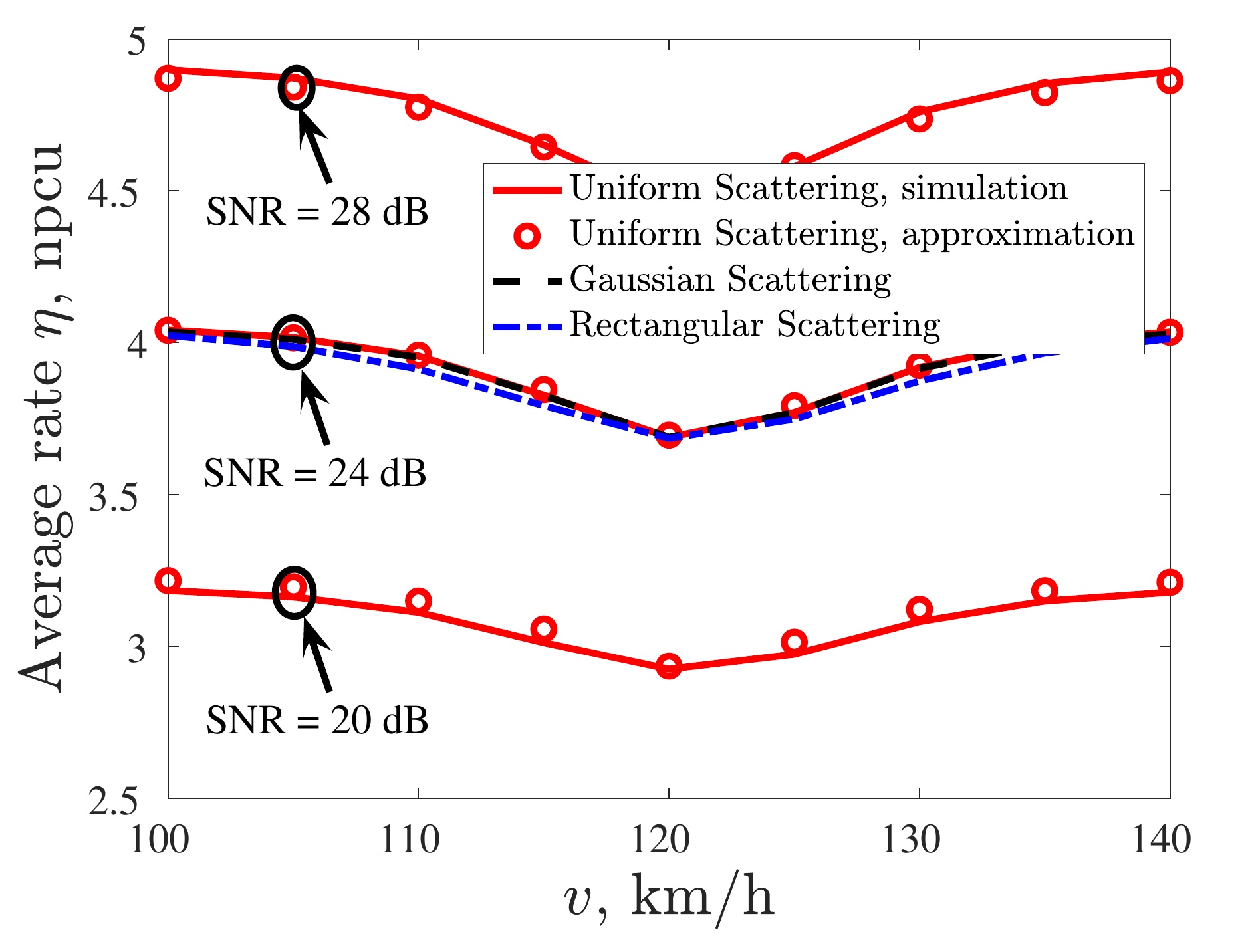}\\
\caption{ $\eta_{\text{opt}}$ vs. $v$. $R_\text{min} = 2$ \ac{npcu},  and SNR = 20, 24, 28 dB.}
\label{fig_r3}
\vspace{-5mm}
\end{figure}
\section{Conclusion}
We proposed an \ac{HARQ}-based approach in order to compensate for the spatial mismatch  and spectral under-utilization  of \ac{PA} systems. We designed the system such that the feedback from the PA can be used to adjust the codeword length in order to improve the average rate. The simulation and analytical results show that, while HARQ-assisted PA systems lead to considerable performance improvement, compared to different benchmark systems, the average rate is remarkably affected by the spatial mismatch, the \ac{SNR} as well as  the delay constraint.
\vspace{-8mm}
\bibliographystyle{IEEEtran}

\bibliography{main.bib}

\end{document}